\newtheorem{Thm}{Theorem}
\newtheorem{Lem}{Lemma}
\newtheorem{Prop}{Proposition}
\theoremstyle{definition}
\newtheorem{Exam}{Example}
\newcommand{\bra}[1]{{\left\langle #1 \right|}}
\newcommand{\ket}[1]{{\left| #1 \right\rangle}}
\newcommand{\C}{\mbox{$\mathbb C$}}
\newcommand{\T}{\mbox{$\mathrm{tr}$}}
\title{On the geometric distance between quantum states with positive partial transposition and private states}
\author{
Jeong San Kim\footnote{Institute for Quantum Information Science, University of Calgary,
Alberta T2N 1N4, Canada. E-mail: jkim@qis.ucalgary.ca} \, and
Barry C. Sanders\footnote{Institute for Quantum Information Science, University of Calgary,
Alberta T2N 1N4, Canada.}
}
\date{\today}
\begin{document}

\maketitle

\begin{abstract}
 We prove an analytic positive lower bound for the geometric distance
between entangled positive partial transpose (PPT) states of a broad class
and any private state that delivers one secure key bit. Our proof holds for any Hilbert space
of finite dimension. Although our result is proven for a specific class of PPT states, we show that
our bound nonetheless holds for all known entangled PPT states with non-zero distillable key rates
whether or not they are in our special class.
\end{abstract}

\medskip
Mathematics Subject Classification (2000): 81P68, 
46B20, 
47N50. 

Key words: Bound entangled state, private state, trace-norm distance, positive partial transposition (PPT).

\section{Introduction}

Whereas quantum entanglement is a non-local quantum correlation
among distinct systems providing various useful applications
such as quantum teleportation and dense coding~\cite{tele, dense}, it
is known that there are two different types of entanglement. One
of them is {\em free} entanglement, which can be distilled into
pure entanglement by means of {\em Local operations and Classical
Communications} (LOCC). Otherwise, the entanglement is said to be
{\em bounded}, if it is undistillable~\cite{PPTB}.

As most applications of quantum entanglement are based on the use of pure
entangled states, it is clear that having free entanglement
assures all of the tasks are possible. In quantum cryptography,
especially in quantum key distribution (QKD), secure key
distillation in QKD protocols has been considered to be closely
related with the distillation of pure entanglement~\cite{
SP}. Moreover, the existence of entanglement (whether it is free entanglement or not)
in a given state is known to be necessary for any protocol to distill secure
key from the state~\cite{CLL, CGLL}.

Although the class of {\em bound entangled
states} cannot be converted into a maximally entangled state
even in an asymptotic sense, bound entanglement
can be useful, in a catalytic way, in some quantum information processing (QIP)~\cite{HHH, M}.
However the use of bound entanglement as a resource in most QIP
still seems limited and strongly doubtful.

For any bipartite quantum systems, there is a simple necessary
condition for a given state to be separable. This is due to the
property that any separable state has Positive Partial
Transposition (PPT)~\cite{Peres}, and PPT is also known to be sufficient for separability
in $2\otimes 2$ or $2\otimes 3$ quantum systems~\cite{Horodeckis1}.
For entangled states, PPT is sufficient to guarantee bound entanglement:
Any PPT entangled state is a bound entangled state~\cite{PPTB},
whereas the existence of a bound entangled state with Negative Partial
Transposition (NPT) is still an open question.

In quantum cryptography, it has been recently shown that the class
of quantum states from which a secure key can be obtained just by
performing local measurement and classical communication is much
wider than that of maximally entangled states. Quantum states of
this class are called {\em private states}~\cite{HHHO1}, and
surprisingly, it has also been shown that there do exist some
classes of bound entangled states that can be asymptotically
approximated to private states~\cite{HHHO1, HHHO2, CCKKL, HA}.
In other words, even though the distillation
of pure entanglement from bound entangled states is not feasible,
some bound entangled states can be used as the resource in QKD
at least asymptotically. However, most examples of bound entangled states with non-zero
distillable key rates~\cite{HHHO1, HHHO2, CCKKL, HA} generally require
very large dimensional Hilbert space, in fact infinite dimensional
quantum system, to be approximated to private states.

Here, we address
the question of bound entanglement as being a resource for finite dimensional quantum systems
by showing that a broad class of bound entangled states are never arbitrarily close to private states
for any finite dimensional case.
By using trace norm value as the geometric
distance between quantum states, we provide an analytic positive
lower bound of the distance between a set of PPT states and
the set of private states in terms of the
dimension of the quantum system. We conjecture that this lower-bounded separation
of private states from bounded entangled states holds for all bounded entangled states, and
verify this conjecture for all known cases of bound entangled states with positive distillable
secure key rate~\cite{HHHO1, HHHO2, CCKKL, HA, HPHH}.

This paper is organized as follows. In Sec.~\ref{Sec:Private
States}, we recall the definition of private states and some
related propositions. In Sec.~\ref{Sec:Lower
Bound}, we propose an analytic lower bound of the trace norm distance
for a class of PPT states from the set of private states. In
Sec.~\ref{Sec:Examples}, we show that the proposed bound holds for
every PPT bound entangled state with non-zero distillable key rates
given in~\cite{HHHO1, HHHO2, CCKKL, HA, HPHH}, and we finally
summarize our results in Sec.~\ref{Sec:Conclusion}.

\section{Private States and Distillable Key Rate}\label{Sec:Private States}

A private state (or {\em pbit})~\cite{HHHO1, HHHO2}
 $\gamma_{ABA'B'}$ in $\mathcal B (\C^{2}\otimes \C^{2}\otimes \C^{d}\otimes \C^{d})$
is defined as
\begin{equation}
\gamma_{ABA'B'} = \frac{1}{2} \sum_{k,l=0}^{1}
\ket{kk}_{AB}\bra{ll}\otimes U_{kk}\rho_{A'B'} U^{\dagger}_{ll},
\label{eq:pdit}
\end{equation}
where $U_{jj}$'s and $\rho_{A'B'}$ are arbitrary unitary matrices
and a quantum state in subsystems $A'B'$, and  $d$ is the dimension of the
subsystems $A'$ and $B'$.
(We may assume that $A$ and $B$ have same dimension, otherwise we can take the larger one)

The set of private states is known to be the most general class of quantum states
that contains perfectly secure key~\cite{HHHO1}. In other words,
by performing local measurements on subsystems
$A$ and $B$ of  $\gamma_{ABA'B'}$ in Eq.~(\ref{eq:pdit}), one can obtain
one-bit secure key between $A$ and $B$. Conversely, if there is a
quantum state from which one can obtain one-bit secure key just by
performing local measurements on some subsystems (possibly whole
system) of the state, it has to be in forms of a private
state.

Because private states define all possible quantum states containing one-bit perfectly secure key,
a quantum state having high fidelity with a private state
can also be expected to behave similarly as a private state does: It would deliver
a secure key, although the key itself might not be perfectly secure.
This intuitive expectation is also shown to be true concerned with the relation between
fidelity of quantum states and their geometric distance. A state $\rho$ has a non-zero
distillable key rate, $K_D(\rho)>0$, if it is close enough to a
private state by means of trace norm distance~\cite{HHHO2}.
Moreover, $\rho$ itself does not need to be close enough to a
private state. Instead, if sufficiently many copies of $\rho$ can
be transformed into a state $\rho'$ by LOCC, and $\rho'$ is close
enough to a private state, it can be easily shown that not only
$\rho'$ but $\rho$ as well have non-zero distillable key
rates~\cite{CCKKL}.

The trace norm distance between a state $\rho$
and private states was also shown to have the following analytic
characterizations~\cite{HHHO2}.

\begin{Prop}\label{Prop:equiv1}
If the state $\rho_{ABA'B'} \in \mathcal{B}(\C^2 \otimes \C^2 \otimes \C^d
\otimes \C^{d'})$ written in the form
\begin{eqnarray}
\rho_{ABA'B'} &=&\sum_{i,j,k,l=0}^{1}\ket{ij}_{AB}\bra{kl} \otimes A_{ijkl} \nonumber \\
&=& \left[%
\begin{array}{cccc}
  A_{0000} & A_{0001} & A_{0010} & A_{0011} \\
  A_{0100} & A_{0101} & A_{0110} & A_{0111} \\
  A_{1000} & A_{1001} & A_{1010} & A_{1011} \\
  A_{1100} & A_{1101} & A_{1110} & A_{1111}
\end{array}%
\right], \label{eq:Dbound}
\end{eqnarray}
fulfills
$\|\rho_{ABA'B'}-\gamma_{ABA'B'} \| \le \varepsilon$ for some pbit $\gamma_{ABA'B'}$ and
$0<\varepsilon<1$, then there holds $\|A_{0011}\|\ge
1/2-\varepsilon$. Here, $\|\cdot\|$ is the trace norm defined as
$\|A\|=\T \sqrt{A^{\dagger}A}$, for any operator $A$.
\end{Prop}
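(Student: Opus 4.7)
The plan is to compare the $(00,11)$ block of $\rho_{ABA'B'}$ directly with the $(00,11)$ block of the nearby private state $\gamma_{ABA'B'}$, and then use contractivity of block extraction in trace norm plus the reverse triangle inequality.

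First I would compute the block structure of $\gamma_{ABA'B'}$ from the defining formula \eqref{eq:pdit}. Since $\ket{kk}_{AB}\bra{ll}$ lives only in the $(kk, ll)$ slot of the $AB$-block decomposition, the only nonzero blocks of $\gamma$ are the corner blocks $(00,00),(00,11),(11,00),(11,11)$. In particular the $(0,0,1,1)$-block of $\gamma$ is
\begin{equation*}
A_{0011}(\gamma) \;=\; \tfrac{1}{2}\, U_{00}\,\rho_{A'B'}\,U_{11}^{\dagger},
\end{equation*}
whose trace norm is $\tfrac{1}{2}\,\|\rho_{A'B'}\| = \tfrac{1}{2}$, because unitaries preserve the trace norm and $\rho_{A'B'}$ is a state.

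Next I would note that extracting the $(00,11)$-block is a linear, contractive map in trace norm. Concretely, $A_{0011}(X) = (\bra{00}_{AB}\otimes I_{A'B'})\, X\,(\ket{11}_{AB}\otimes I_{A'B'})$, and sandwiching by two operators of operator norm $1$ cannot increase the trace norm; so $\|A_{0011}(X)\| \le \|X\|$ for every operator $X$ on $\C^2\otimes\C^2\otimes\C^d\otimes\C^{d'}$. Applying this to $X=\rho_{ABA'B'}-\gamma_{ABA'B'}$ gives
\begin{equation*}
\| A_{0011}(\rho) - A_{0011}(\gamma)\| \;=\; \|A_{0011}(\rho-\gamma)\| \;\le\; \|\rho-\gamma\| \;\le\; \varepsilon.
\end{equation*}

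Finally, I would combine these two facts via the reverse triangle inequality applied to the trace norm:
\begin{equation*}
\|A_{0011}\| \;=\; \|A_{0011}(\gamma) + (A_{0011}(\rho)-A_{0011}(\gamma))\| \;\ge\; \|A_{0011}(\gamma)\| - \|A_{0011}(\rho-\gamma)\| \;\ge\; \tfrac{1}{2}-\varepsilon,
\end{equation*}
which is exactly the claimed bound. There is no real obstacle here; the only step that requires a brief justification is the contractivity of block extraction, which I would state as a one-line consequence of the H\"older-type estimate $\|AXB\|_{1}\le \|A\|_{\infty}\|B\|_{\infty}\|X\|_{1}$ with $A,B$ being partial isometries of norm $1$.
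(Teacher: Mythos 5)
Your proof is correct: the corner block of a pbit is $\tfrac12 U_{00}\rho_{A'B'}U_{11}^{\dagger}$ with trace norm $\tfrac12$, block extraction is a trace-norm contraction by $\|AXB\|_1\le\|A\|_\infty\|X\|_1\|B\|_\infty$, and the reverse triangle inequality finishes the argument. Note that the paper does not prove this proposition itself but imports it from the cited reference \cite{HHHO2}, where the argument is essentially the same as yours, so there is nothing to add.
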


\begin{Prop}\label{Prop:equiv2}
If the state $\rho_{ABA'B'} \in \mathcal{B}(\C^2 \otimes \C^2 \otimes \C^d
\otimes \C^{d'})$ written in the form
$\rho_{ABA'B'}=\sum_{i,j,k,l}\ket{ij}_{AB}\bra{kl} \otimes A_{ijkl}$ fulfills
$\|A_{0011}\|\ge 1/2-\varepsilon$ for $0<\varepsilon<1$, then
there exists a pbit $\gamma_{ABA'B'}$ such that $\|\rho_{ABA'B'}-\gamma_{ABA'B'}\| \le
\delta(\varepsilon)$ with $\delta(\varepsilon)$ vanishing, when
$\varepsilon$ approaches zero.
\end{Prop}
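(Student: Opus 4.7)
The strategy is to exhibit an explicit pbit $\gamma_{ABA'B'}$ built from the singular value decomposition of $A_{0011}$ and then show that $\|\rho_{ABA'B'} - \gamma_{ABA'B'}\|$ is small, with the bound vanishing as $\varepsilon\to 0$. Positivity of $\rho_{ABA'B'}$ is the engine: it pins down a tight relationship between $A_{0011}$ and the diagonal blocks $A_{0000}, A_{1111}$ precisely when $\|A_{0011}\|$ is close to its maximal value $1/2$.

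First I would extract structural information from the hypothesis. For any positive semidefinite block matrix, each block with $(i,j)\neq (k,l)$ factors as $A_{ijkl} = A_{ijij}^{1/2} K A_{klkl}^{1/2}$ for some contraction $K$; in particular $\|A_{ijkl}\| \le \sqrt{\T A_{ijij}\cdot \T A_{klkl}}$. Applying this to $A_{0011}$ together with $\T A_{0000}+\T A_{0101}+\T A_{1010}+\T A_{1111}=1$ forces, via AM-GM, both $\T A_{0000}$ and $\T A_{1111}$ to lie within $O(\sqrt{\varepsilon})$ of $1/2$ and $\T A_{0101}, \T A_{1010}$ to lie within $O(\varepsilon)$ of $0$. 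The same block inequality then bounds every off-diagonal block whose index pair $(ij)$ or $(kl)$ equals $(01)$ or $(10)$ by $O(\sqrt{\varepsilon})$ in trace norm, so ten of the sixteen blocks of $\rho_{ABA'B'}$ are already under control.

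Next I would construct $\gamma$. Take the SVD $A_{0011}=U\Sigma V^{\dagger}$, set $\rho_{A'B'}:=\Sigma/\T\Sigma$, and choose $U_{00}=U$, $U_{11}=V$ in Eq.~(\ref{eq:pdit}). Then $\gamma_{ABA'B'}$ is supported on the $\{\ket{00},\ket{11}\}$ sector of $AB$: its $(00,00)$ and $(11,11)$ blocks are $U\Sigma U^{\dagger}/(2\T\Sigma)$ and $V\Sigma V^{\dagger}/(2\T\Sigma)$, and its $(00,11)$ block equals $A_{0011}/(2\T|A_{0011}|)$, i.e.\ $A_{0011}$ multiplied by a scalar within $O(\varepsilon)$ of unity. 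Summing trace norms across the sixteen block differences, the ten ``off-support'' blocks are handled by the previous step and the two cross-blocks contribute only the $O(\varepsilon)$ scalar mismatch.

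The main obstacle is the remaining pair of diagonal blocks: I need $A_{0000}$ and $A_{1111}$ themselves to be close in trace norm (not merely in trace) to $U\Sigma U^{\dagger}/(2\T\Sigma)$ and $V\Sigma V^{\dagger}/(2\T\Sigma)$ respectively. This calls for a stability version of the block-positivity inequality: when $\|A_{0011}\|$ nearly saturates $\sqrt{\T A_{0000}\cdot \T A_{1111}}$, the contraction $K$ in the factorization $A_{0011}=A_{0000}^{1/2} K A_{1111}^{1/2}$ must be close to the partial isometry appearing in the polar decomposition of $A_{0000}^{1/2} A_{1111}^{1/2}$, and the spectra of $A_{0000}$ and $A_{1111}$ must nearly match the spectrum of $\Sigma/(2\T\Sigma)$ under conjugation by $U$ and $V$. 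Assembling these Hilbert--Schmidt estimates and converting them back to trace norm yields an explicit $\delta(\varepsilon)$ that is polynomial in $\sqrt{\varepsilon}$ and vanishes as $\varepsilon\to 0$.
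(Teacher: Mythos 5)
You should note first that the paper does not actually prove Proposition~\ref{Prop:equiv2}: it is quoted from~\cite{HHHO2}, and the argument there runs through the purification picture the paper itself records in Eq.~(\ref{A0011}), $\|A_{0011}\|=\sqrt{p_{00}p_{11}}\,\mathcal{F}(\rho^E_{00},\rho^E_{11})$ --- near-maximal $\|A_{0011}\|$ forces $p_{00},p_{11}$ close to $1/2$ and Eve's conditional states nearly identical, and Uhlmann's theorem then supplies the unitaries on $A'B'$ that define a nearby pbit, with fidelity converted to trace norm at the end; the resulting $\delta(\varepsilon)$ is dimension-independent. Your route is genuinely different (direct block-matrix analysis), and its first two steps are sound: positivity gives $A_{ijkl}=A_{ijij}^{1/2}KA_{klkl}^{1/2}$ with $K$ a contraction, hence $\|A_{ijkl}\|\le\sqrt{\T A_{ijij}\,\T A_{klkl}}$, so the weight concentrates on the $00/11$ sector and every block touching the $01/10$ sectors is $O(\sqrt{\varepsilon})$ (twelve blocks, in fact, not ten --- a harmless miscount); and your SVD-based pbit is the right target, since its cross block is $A_{0011}$ up to an $O(\varepsilon)$ scalar and at $\varepsilon=0$ exact saturation forces $A_{0011}=A_{0000}W^{\dagger}$, $A_{0000}=U\Sigma U^{\dagger}$, $A_{1111}=V\Sigma V^{\dagger}$, so that $\gamma=\rho$.

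The genuine gap is your third step, which is the entire content of the proposition and is asserted rather than proved: you never establish the quantitative stability lemma that $\rho\ge 0$ together with $\|A_{0011}\|\ge 1/2-\varepsilon$ forces $\|A_{0000}-U\Sigma U^{\dagger}/(2\T\Sigma)\|$ and $\|A_{1111}-V\Sigma V^{\dagger}/(2\T\Sigma)\|$ to vanish with $\varepsilon$. Moreover, the way you describe it would not suffice as stated: near-matching of the \emph{spectra} of $A_{0000}$, $A_{1111}$ with that of $\Sigma/(2\T\Sigma)$ ``under conjugation by $U$ and $V$'' is weaker than the trace-norm operator closeness you need (eigenbasis alignment is exactly the issue), and the plan of deriving Hilbert--Schmidt estimates from near-saturation of Cauchy--Schwarz and ``converting them back to trace norm'' incurs a $\sqrt{\mathrm{rank}}$ factor, so the $\delta$ obtained this way would depend on $d,d'$. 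That is still consistent with the literal statement at fixed dimension --- indeed, for fixed $d,d'$ the qualitative claim already follows from compactness plus the exact characterization that $\|A_{0011}\|=1/2$ holds only for pbits --- but it is strictly weaker than the uniform $\delta(\varepsilon)$ of the cited result. To make your approach deliver a dimension-free bound you would need trace-norm tools (Powers--St{\o}rmer--type estimates, stability of the polar decomposition in trace norm), or more simply the purification route via Eq.~(\ref{A0011}) and Uhlmann. As it stands, the proposal is a reasonable outline whose decisive lemma is missing.
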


With respect to the trace norm value of a certain block $A_{0011}$ as well as
its difference from $1/2$ in the propositions above,
let us consider a purification $\ket{\psi}_{ABA'B'E}$ of $\rho_{ABA'B'}$ such that
\begin{equation}
\T_{E}\left( \ket{\psi}_{ABA'B'E}\bra{\psi}\right) =\rho_{ABA'B'}.
\end{equation}
By straightforward calculation~\cite{HHHO2},
we have
\begin{equation}
\|A_{0011}\|=\sqrt{p_{00}p_{11}} {\mathcal {F}}\left( \rho^E_{00}, \rho^E_{11}\right).
\label{A0011}
\end{equation}
where $p_{ii}=\T\left[\left(\ket{ii}_{AB}\bra{ii} \otimes I_{A'B'E} \right) \ket{\psi}_{ABA'B'E}\bra{\psi}\right]$
is the probability of the outcome state $\ket{ii}_{AB}$ for $i=0,1$ by the local measurement of $A$ and $B$,
$\rho^E_{ii}$ is the resulting state on the subsystem $E$ corresponding to the outcome $\ket{ii}_{AB}$ on $AB$,
that is,
\begin{equation}
\rho^E_{ii}=\T_{ABA'B'}\left[\left(\ket{ii}_{AB}\bra{ii} \otimes I_{A'B'E} \right) \ket{\psi}_{ABA'B'E}\bra{\psi}\right]/p_{ii},
\end{equation}
and ${\mathcal {F}}\left( \rho^E_{00}, \rho^E_{11}\right)$ is the fidelity of $\rho^E_{00}$ and $\rho^E_{11}$,
defined as
\begin{equation}
{\mathcal {F}}\left( \rho^E_{00}, \rho^E_{11}\right)=\T \sqrt{\sqrt{\rho^E_{00}}\rho^E_{11}\sqrt{\rho^E_{00}}}.
\end{equation}
By assuming the worst case scenario, that is, the eavesdropper Eve has the purification of $\rho_{ABA'B'}$,
$\|A_{0011}\|=1/2$ happens if and only if $p_{00}=p_{11}=1/2$, and, at the same time,
${\mathcal {F}}\left( \rho^E_{00}, \rho^E_{11}\right)=1$.
In other words, the only possible outcomes are $\ket{00}_{AB}$ and $\ket{11}_{AB}$ with the same probability, which provide perfect
correlation between $A$ and $B$, and this correlation is independent of Eve because $\rho^E_{00}$ and $\rho^E_{11}$ are identical.
This implies that $\rho_{ABA'B'}$ contains one-bit perfectly secure key, and thus it is a private state~\cite{HHHO1}.

Moreover, Proposition~\ref{Prop:equiv1} together with
Proposition~\ref{Prop:equiv2} give us a quantitative relation
between the distance of a quantum state $\rho_{ABA'B'}$ from a
private state and the trace norm value $\|A_{0011}\|$.
Thus, the lower bound of the trace norm distance
between a quantum state and private states can be now rephrased as an
upper bound of $\|A_{0011}\|$.

\section{PPT states and the lower bound of the distance}\label{Sec:Lower Bound}

Before we provide an analytic lower bound, let us consider the
negative eigenvalues of bipartite pure states that might arise
after partial transposition. For any pure state $\ket{\psi}_{AB} \in
\C^{d}\otimes \C^{d}$ with its Schmidt decomposition
\begin{equation}
\ket{\psi}_{AB}=\sum_{i=0}^{d-1}a_{i}\ket{ii}_{AB},~a_{i} \geq
0,~\sum_{i=0}^{d-1}a_{i}^{2}=1, \label{eq:pure state}
\end{equation}
the negative eigenvalues of the partially
transposed state ${\left(\ket{\psi}_{AB}\bra{\psi}
\right)}^{\Gamma}$ can be $- a_{i}a_{j}$ for $i \neq
j$~\cite{KDS}. Thus, the largest {\em negativity}~\cite{VW} (the sum of the absolute values of all negative eigenvalues) can be
achieved when $a_{i}=1/\sqrt{d}$ for all $i=0,\cdots,d-1$,
and it is $\sum_{i<j}1/d= (d-1)/2$. In this case, the sum
of positive eigenvalues is $(d+1)/2$, since
$\T{({\left(\ket{\psi}_{AB}\bra{\psi}
\right)}^{\Gamma})}=\T{(\ket{\psi}_{AB}\bra{\psi} )}=1$.
Thus, we have
\begin{equation}
\T{\left({\left(\ket{\psi}_{AB}\bra{\psi} \right)}^{\Gamma}\right)}=
\emph{P} - \emph{N},
\end{equation}
where $\emph{P}$ and $\emph{N}$ are the sum
of the absolute values of positive and negative eigenvalues of
${\left(\ket{\psi}_{AB}\bra{\psi} \right)}^{\Gamma}$ respectively,
and
\begin{equation}
\emph{P}\leq\frac{d+1}{2},~~ \emph{N}\leq\frac{d-1}{2}.
\label{eq:negativity sum}
\end{equation}

Now, let $\ket{\phi^{\pm}}$ and $\ket{\psi^{\pm}}$ be the Bell
states in $\C^2\otimes \C^2$; then we have the following lemma.

\begin{Lem}\label{Lem:Ubound}
For a state $\rho_{ABA'B'} \in \mathcal{B}(\C^2 \otimes \C^2
\otimes \C^d \otimes \C^{d})$,
\begin{align}
\rho_{ABA'B'} =&\ket{\phi^+}_{AB}\bra{\phi^+} \otimes \sigma_0
+\ket{\phi^-}_{AB}\bra{\phi^-} \otimes \sigma_1\nonumber\\
&+(\ket{\psi^+}_{AB}\bra{\psi^+} + \ket{\psi^-}_{AB}\bra{\psi^-})
\otimes \sigma_2, \label{belldia}
\end{align}
with arbitrary states $\sigma_i$, $i=0, \ldots, 2$ on subsystem $A'B'$,
if $\rho$ has PPT, then $\|\sigma_0 - \sigma_1\|$  has an upper
bound depending on the dimension $d$,
\begin{equation}
\|\sigma_0 - \sigma_1\| \le 1-\frac{1}{d+1}.
\label{eq:Ubound}
\end{equation}
\end{Lem}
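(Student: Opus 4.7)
The plan is to compute $\rho_{ABA'B'}^{\Gamma}$ (partial transposition on the $BB'$ cut) explicitly in the Bell basis on $AB$, extract the PPT constraints there, and combine them with two triangle-inequality bounds. First I would use the fact that each of $(\ket{\phi^\pm}\bra{\phi^\pm})^{\Gamma_B}$ and $(\ket{\psi^\pm}\bra{\psi^\pm})^{\Gamma_B}$ is diagonal in the Bell basis---this follows from the identification $(\ket{\phi^+}\bra{\phi^+})^{\Gamma_B}=\frac{1}{2}\mathrm{SWAP}$ together with the decomposition of the $\mathrm{SWAP}$ operator into the symmetric-minus-antisymmetric sum of Bell-state projectors---to show that $\rho^{\Gamma}$ is block-diagonal with respect to the Bell basis on $AB$. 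Setting $\Delta:=\sigma_0-\sigma_1$, the four blocks acting on $A'B'$ come out to $\frac{1}{2}(\sigma_0+\sigma_1)^{\Gamma_{B'}}$ (twice), together with $\sigma_2^{\Gamma_{B'}}+\frac{1}{2}\Delta^{\Gamma_{B'}}$ and $\sigma_2^{\Gamma_{B'}}-\frac{1}{2}\Delta^{\Gamma_{B'}}$. The PPT hypothesis therefore forces $X_\pm:=\sigma_2^{\Gamma_{B'}}\pm\frac{1}{2}\Delta^{\Gamma_{B'}}\ge 0$.

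Setting $p_i:=\T\sigma_i$, so that $p_0+p_1+2p_2=\T\rho=1$, and noting $\Delta^{\Gamma_{B'}}=X_+-X_-$ with $X_++X_-=2\sigma_2^{\Gamma_{B'}}$, the triangle inequality yields
\[
\|\Delta^{\Gamma_{B'}}\|\le \T X_++\T X_-=2\T\sigma_2=2p_2.
\]
To convert this into a bound on $\|\Delta\|$ itself, I would spectrally decompose $\Delta^{\Gamma_{B'}}=\sum_k\beta_k\ket{f_k}\bra{f_k}$ and re-apply partial transposition to write $\Delta=\sum_k\beta_k(\ket{f_k}\bra{f_k})^{\Gamma_{B'}}$. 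The pure-state estimate established just before the lemma---that $\|(\ket{f_k}\bra{f_k})^{\Gamma_{B'}}\|=P+N\le d$ for every unit vector in $\C^d\otimes\C^d$---combined with the triangle inequality then gives $\|\Delta\|\le d\,\|\Delta^{\Gamma_{B'}}\|\le 2d\,p_2$.

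On the other hand, since $\sigma_0,\sigma_1\ge 0$, the triangle inequality directly gives $\|\Delta\|\le\T\sigma_0+\T\sigma_1=p_0+p_1=1-2p_2$. Combining, $\|\Delta\|\le\min(2d\,p_2,\,1-2p_2)$, and the right-hand side is maximized over $p_2\in[0,1/2]$ exactly when the two arguments coincide, at $p_2=1/(2(d+1))$, where both equal $d/(d+1)=1-1/(d+1)$. This gives the claimed bound.

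The step I expect to be the main obstacle is the dimension-dependent conversion $\|\Delta\|\le d\,\|\Delta^{\Gamma_{B'}}\|$: this is precisely what injects the factor $1/(d+1)$ into the final estimate, and it leans on the pure-state partial-transpose analysis from the opening of Section~\ref{Sec:Lower Bound}, but applied to the spectral pieces of $\Delta^{\Gamma_{B'}}$ rather than of $\Delta$. The remaining ingredients---the Bell-basis block decomposition of $\rho^{\Gamma}$ and the final two-bound optimization over $p_2$---should be routine.
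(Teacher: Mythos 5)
Your proposal is correct and follows essentially the same route as the paper: the PPT constraint $2\sigma_2^{\Gamma}\pm(\sigma_0-\sigma_1)^{\Gamma}\ge 0$ (which you derive explicitly from the Bell-basis block structure rather than citing), the spectral decomposition of $(\sigma_0-\sigma_1)^{\Gamma}$ combined with the pure-state bound $\left\|(\ket{x}\bra{x})^{\Gamma}\right\|\le d$, and the normalization bound $\|\sigma_0-\sigma_1\|\le 1-2\T\sigma_2$. The only difference is cosmetic: you combine the two bounds by a direct optimization over $p_2$, whereas the paper packages the same inequalities as a proof by contradiction.
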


Here, we note that any bipartite state can be transformed into
the class of quantum states in Lemma~\ref{Lem:Ubound} by applying
local depolarization and mixing operations~\cite{DCLB} on its two-qubit subsystems $AB$.
Thus, any PPT state can also be transformed into this class with PPT property
preserved. Furthermore, $\rho_{ABA'B'}$ in Eq.~(\ref{belldia}) has
the matrix form,
\begin{equation}
\rho_{ABA'B'} = \frac{1}{2}\left[
\begin{array}{cccc}
  \sigma_{0}+\sigma_{1}& 0 & 0 & \sigma_{0}-\sigma_{1}\\
  0 & 2\sigma_{2} & 0&0 \\
  0 & 0 & 2\sigma_{2}&0 \\
  \sigma_{0}-\sigma_{1}& 0 & 0 & \sigma_{0}+\sigma_{1}\\
\end{array}
\right], \label{eq:rho_matrix}
\end{equation}
where the trace norm value of its upper-right block $A_{0011}$ is
\begin{equation}
\|A_{0011}\|=\frac{\|\sigma_0 - \sigma_1\|}{2}.
\label{tracenorm}
\end{equation}
If $\rho_{ABA'B'}$ has PPT, then Lemma~\ref{Lem:Ubound} implies that
Eq.~(\ref{tracenorm}) is bounded above by $1/2-1/\left[2(d+1)\right]$.
Thus, by Proposition~\ref{Prop:equiv1}, the
trace norm distance between $\rho_{ABA'B'}$ and any private state
is bounded below by $1/\left[2(d+1)\right]$.

\begin{proof}[Proof of Lemma~\ref{Lem:Ubound}]
$(\sigma_0 -\sigma_1)^{\Gamma}$ is hermitian so it can have a diagonal representation as
\begin{equation}
 (\sigma_0 -
\sigma_1)^{\Gamma}=\sum_{j}\lambda_{j}\ket{x_j}\bra{x_j},
\end{equation}
where $\lambda_j$ are the eigenvalues (not necessarily non-negative)
with corresponding eigenvectors $\ket{x_j}$, then we have
\begin{equation}
\sigma_0 - \sigma_1
=\sum_{j}\lambda_{j}(\ket{x_j}\bra{x_j})^{\Gamma}.
\end{equation}
As $\rho$ has PPT, $\rho^{\Gamma}\geq 0$, which is
equivalent to $2\sigma_2^{\Gamma}\pm (\sigma_0 -
\sigma_1)^{\Gamma} \geq 0$~\cite{CCKKL}, we have
\begin{equation}
\sum_{j}|\lambda_{j}|=\|(\sigma_0-\sigma_1)^{\Gamma}\|\leq
2\|\sigma_{2}^{\Gamma}\|=2\|\sigma_{2}\|.
\end{equation}
Now, suppose $\|\sigma_0 - \sigma_1\| > 1-\frac{1}{d+1}$, then we have
\begin{align}
1-\frac{1}{d+1}<
\|\sigma_0 +
\sigma_1\|=1-2\|\sigma_{2}\|, \label{eq:Ubound01}
\end{align}
and thus, $\sum_{j}|\lambda_{j}|\leq 2\|\sigma_{2}\|<
\frac{1}{d+1}$. Hence
\begin{align}
\|\sigma_0 - \sigma_1\|
=&\|\sum_{j}\lambda_{j}(\ket{x_j}\bra{x_j})^{\Gamma}\|\nonumber\\
\leq &\sum_{j}|\lambda_{j}| \left\|(\ket{x_j}\bra{x_j})^{\Gamma}\right\|\nonumber\\
=& \sum_{i}|\lambda_{i}|(|\emph{P}|+|\emph{N}|)\nonumber\\
\leq& \sum_{j}|\lambda_{j}|d\nonumber\\
<& \|\sigma_0 - \sigma_1\|,
\end{align}
where the second inequality is due to Eq.~(\ref{eq:negativity
sum}). However, this is a contradiction, and thus,
\begin{equation}
\|\sigma_0 - \sigma_1\| \le 1-\frac{1}{d+1}.
\end{equation}
\end{proof}

Even though partial transposition (as well as full transposition)
of an operator strongly depends on the basis of its matrix
representation, the eigenvalues of the partially transposed
operator are independent from the choice of basis. Furthermore,
the proof of Lemma~\ref{Lem:Ubound} is only concerned with the
eigenvalues of the partially transposed quantum state. For this reason, the result of
Lemma~\ref{Lem:Ubound} is true not only for the private states
with a certain basis, but for any private state regardless
of the basis choice as well.

Now, we propose a more general class of PPT states, and prove that
the lower bound of the trace norm distance obtained in
Lemma~\ref{Lem:Ubound} is still valid for this class.
\begin{Thm}\label{Thm:Dbound}
For any state $\rho_{ABA'B'} \in \mathcal{B}(\C^2 \otimes \C^2 \otimes \C^d
\otimes \C^{d})$ with
\begin{eqnarray}
\rho_{ABA'B'} &=&\sum_{i,j,k,l=0}^{1}\ket{i,j}\bra{k,l} \otimes A_{ijkl} \nonumber \\
&=& \left[%
\begin{array}{cccc}
  A_{0000} & A_{0001} & A_{0010} & A_{0011} \\
  A_{0100} & A_{0101} & A_{0110} & A_{0111} \\
  A_{1000} & A_{1001} & A_{1010} & A_{1011} \\
  A_{1100} & A_{1101} & A_{1110} & A_{1111}
\end{array}%
\right], \label{eq:Dbound}
\end{eqnarray}
if $\rho_{ABA'B'}$ has PPT and its upper-right block $A_{0011}$ is
hermitian, then there exists a positive lower bound of the trace
norm distance between $\rho_{ABA'B'}$ and any private state $\gamma_{ABA'B'}$,
\begin{equation}
\|\rho_{ABA'B'}-\gamma_{ABA'B'} \| \geq  \frac{1}{2(d+1)}.
\label{eq.lbound}
\end{equation}
\end{Thm}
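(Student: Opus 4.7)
The plan is to reduce Theorem~\ref{Thm:Dbound} to Lemma~\ref{Lem:Ubound} by applying a local operation on the $AB$ subsystem that (i) brings $\rho_{ABA'B'}$ into the restricted Bell-diagonal form of Eq.~(\ref{belldia}), (ii) preserves the PPT property, and (iii) does not decrease the trace norm of the block $A_{0011}$. Once this reduction is in place, Lemma~\ref{Lem:Ubound} upper bounds $\|A_{0011}\|$ by $1/2-1/[2(d+1)]$, and Proposition~\ref{Prop:equiv1} (in contrapositive form) then yields $\|\rho_{ABA'B'}-\gamma_{ABA'B'}\|\ge 1/[2(d+1)]$ for every private state $\gamma_{ABA'B'}$.

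The concrete local map I would use on $AB$ is the composition of two twirls, each a convex combination of unitaries of the form $U_A\otimes U_B\otimes I_{A'B'}$ and hence automatically PPT-preserving. The first is the bilateral Pauli twirl $\rho\mapsto\tfrac14\sum_{P\in\{I,X,Y,Z\}}(P\otimes P\otimes I_{A'B'})\rho(P\otimes P\otimes I_{A'B'})$, which annihilates every block $A_{ijkl}$ with $i+j+k+l$ odd and symmetrizes the remaining ones via $A_{ijkl}\mapsto (A_{ijkl}+A_{\bar i\bar j\bar k\bar l})/2$. The second is the $U(1)$ phase twirl $\rho\mapsto\tfrac{1}{2\pi}\int_0^{2\pi}(e^{i\theta Z}\otimes e^{-i\theta Z}\otimes I_{A'B'})\rho(e^{-i\theta Z}\otimes e^{i\theta Z}\otimes I_{A'B'})\,d\theta$, whose only effect is to annihilate the $\ket{01}\bra{10}$ and $\ket{10}\bra{01}$ blocks while leaving every other block unchanged. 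Writing $\tilde\rho$ for the output and $\tilde A_{ijkl}$ for its blocks, a direct check shows that $\tilde\rho$ takes the form of Eq.~(\ref{belldia}) with $\sigma_0=\tilde A_{0000}+\tilde A_{0011}$, $\sigma_1=\tilde A_{0000}-\tilde A_{0011}$, and $\sigma_2=\tilde A_{0101}$. By hermiticity of $\rho$ we have $A_{1100}=A_{0011}^\dagger$, so the Pauli twirl sends $A_{0011}$ to $(A_{0011}+A_{0011}^\dagger)/2$, which equals $A_{0011}$ precisely when $A_{0011}$ is hermitian; the $U(1)$ twirl does not touch this block. Therefore $\|A_{0011}\|=\|\tilde A_{0011}\|=\|\sigma_0-\sigma_1\|/2$, Lemma~\ref{Lem:Ubound} applied to $\tilde\rho$ gives $\|A_{0011}\|\le 1/2-1/[2(d+1)]$, and Proposition~\ref{Prop:equiv1} converts this into the stated lower bound on $\|\rho_{ABA'B'}-\gamma_{ABA'B'}\|$.

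The main obstacle is pinning down a local CPTP map that simultaneously matches Lemma~\ref{Lem:Ubound}'s restrictive form and does not decrease $\|A_{0011}\|$. The bilateral Pauli twirl alone handles the bulk of the symmetrization but only preserves the hermitian part $(A_{0011}+A_{0011}^\dagger)/2$ of the relevant block---this is precisely why the theorem needs to assume $A_{0011}$ to be hermitian. The auxiliary $U(1)$ twirl is then the cleanest way to kill the residual $\ket{\psi^+}\bra{\psi^-}$-type cross-terms (the blocks $\tilde A_{0110}=\tilde A_{1001}$) without disturbing any other block, thereby bringing $\tilde\rho$ exactly into the form required by Lemma~\ref{Lem:Ubound}.
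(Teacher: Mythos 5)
Your proposal is correct and follows essentially the same route as the paper's proof: transform $\rho$ by local operations on $AB$ into the Bell-diagonal block form of Lemma~\ref{Lem:Ubound} while the hermiticity of $A_{0011}$ guarantees that block (and hence $\|A_{0011}\|$) is untouched, then combine the lemma's bound $\|A_{0011}\|\le \tfrac12-\tfrac{1}{2(d+1)}$ with Proposition~\ref{Prop:equiv1}. The only difference is presentational: the paper invokes "depolarization and mixing operations" by citation, whereas you spell them out explicitly as the bilateral Pauli twirl followed by the $U(1)$ phase twirl, which is a valid concrete realization.
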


\begin{proof}
As $A_{0011}$ is hermitian, we have
$A_{1100}=A_{0011}^\dagger=A_{0011}$. By applying
depolarization and mixing operations, $\rho$ can be transformed to
\begin{equation}
\tilde{\rho} =\frac{1}{2}\left[%
\begin{array}{cccc}
  A_{0000}+A_{1111}& 0 & 0 & 2A_{0011} \\
  0 & A_{0101}+A_{1010} & 0&0 \\
  0 & 0 & A_{0101}+ A_{1010}&0 \\
  2A_{0011}& 0 & 0 & A_{0000}+A_{1111}
\end{array}%
\right]. \label{eq:rho_d}
\end{equation}

Because local depolarization and mixing operations are LOCC,
$\tilde{\rho}$ still has PPT. Furthermore, now $\tilde{\rho}$ is
of the form in Eq.~(\ref{eq:rho_matrix}); therefore $\|A_{0011}\|$
has an upper bound $\frac{1}{2}-\frac{1}{2(d+1)}$ by
Lemma~\ref{Lem:Ubound}.
Thus, by Proposition \ref{Prop:equiv1}, there is a lower
bound $\frac{1}{2(d+1)}$ of the trace norm distance between $\rho$
and any private state.
\end{proof}

In fact, it can be easily seen that the hermitian condition of
$A_{0011}$ in Theorem~\ref{Thm:Dbound} is equivalent to that
$\rho$ can be transformed into $\rho'$, a Bell-diagonal block
matrix, with $A_{0011}$ being untouched. Thus,
Theorem~\ref{Thm:Dbound} deals with the most general class of
quantum states where Lemma~\ref{Lem:Ubound} can be directly
applied.

\section{PPT bound entangled states with non-zero distillable key rates}\label{Sec:Examples}
In this section, we consider all known examples of PPT bound entangled states
of bipartite quantum systems with non-zero distillable key
rates~\cite{HHHO1, HHHO2, CCKKL, HA, HPHH}, and provide a positive lower bound of
the trace norm distance from private states.

\begin{Exam}(Horodecki et al~\cite{HHHO1, HHHO2})

We first consider the PPT states with $K_D>0$ presented in~\cite{HHHO1,HHHO2}.
Let
\begin{equation}
\rho=\dfrac{1}{N}
\left[%
\begin{array}{cccc}
   \left[p({\tau_1+\tau_0})\right]^{\otimes m} &0&0& \left[p({\tau_1-\tau_0})\right]^{\otimes m} \\
0& \left[(1-2p)\tau_0\right]^{\otimes m}&0&0 \\
0&0&\left[(1-2p)\tau_0\right]^{\otimes m}& 0\\
\left[p{({\tau_1-\tau_0})}\right]^{\otimes m} &0&0& \left[p({\tau_1+\tau_0})\right]^{\otimes m}  \\
\end{array}%
\right],
\label{eq:ex1}
\end{equation}
where $N=2(2p)^m+2(1-2p)^m$, $\tau_0=\varrho_s^{\otimes l}$,
$\tau_1=[(\varrho_a+\varrho_s)/2]^{\otimes l}$,
$\varrho_s=2P_\text{sym}/({d^2+d})$ and
$\varrho_a=2P_\text{as}/({d^2-d})$ with
the antisymmetric projector $P_\text{as}$ and symmetric projector $P_\text{sym}$ in $\C^d \otimes \C^d$ system.
For sufficiently large $l$, $m$ and $d$, $\rho$ is known to have a non-zero
distillable key rate $K_D(\rho)>0$. At the same time, $\rho$
can be also shown to have PPT  with a choice of $p\in [0,1/3]$,
thus $\rho$ is a PPT bound entangled state with $K_D(\rho)>0$.

It can be directly checked that the upper-right block of $\rho$ in Eq.~(\ref{eq:ex1})
\begin{equation}
A_{0011}=\frac{1}{N}\left[p({\tau_1-\tau_0})\right]^{\otimes m}
\end{equation}
is hermitian since both $\tau_0$ and $\tau_1$ are hermitian operators.
Thus, $\rho$ satisfies the condition of Theorem~\ref{Thm:Dbound}: There is a positive lower bound for 
the trace norm distance between
$\rho_{ABA'B'}$ and any private state in finite dimensional quantum system, that is, for any finite
$l$, $m$ and $d$ in Eq.~(\ref{eq:ex1}).
\end{Exam}

\begin{Exam}(Chi et al.~\cite{CCKKL})

There were two different classes of PPT bound entangled states with non-zero distillable key
rates proposed in~\cite{CCKKL}.

First, let us consider a quantum state $\rho_{ABA'B'} \in \mathcal B\left( \C^2 \otimes\C^2 \otimes\C^2 \otimes\C^2\right)$ such that
\begin{equation}
\rho_{ABA'B'}=\frac{1}{2}\left[%
\begin{array}{cccc}
  {\sigma_0+\sigma_1} & 0 & 0 &  {\sigma_0-\sigma_1} \\
  0 & 2\sigma_2 & 0 & 0 \\
  0 & 0 & 2\sigma_2 & 0 \\
  {\sigma_0-\sigma_1} & 0 & 0 &  {\sigma_0+\sigma_1} \\
\end{array}%
\right],
\label{eq:ex3}
\end{equation}
where
\begin{align}
\sigma_0=&p\left(\ket{\phi^+}\bra{\phi^+}+\ket{01}\bra{01}\right),~
\sigma_1=p\left(\ket{\phi^-}\bra{\phi^-}+\ket{10}\bra{10}\right), \nonumber\\
\sigma_2=&\frac{p}{\sqrt{2}}\left(\ket{01}\bra{01}+\ket{10}\bra{10}\right) 
+q\ket{00}\bra{00}+r\ket{11}\bra{11},
\end{align}
and $\ket{\phi^{\pm}}=\frac{1}{\sqrt{2}}\left(\ket{00}\pm\ket{11}\right)$.

For $p=[1-2(q+r)]/(4+2\sqrt{2})$, $q>0$, $r>0$, and $0\leq q+r<({2-\sqrt{2}})/{8}$,
$\rho$ is has PPT . Furthermore, by using
an entanglement distillation protocol~\cite{BDSW} on the subsystem $AB$,
$n$ copies of $\rho_{ABA'B'}$
can be transformed to a quantum state
$\rho'_{ABA'B'} \in \mathcal{B}(\C^2 \otimes \C^2 \otimes \C^d \otimes \C^{d})$ with $d=2^n$
such that
\begin{equation}
\rho'_{ABA'B'}=\frac{1}{N}\left[%
\begin{array}{cccc}
  \left({\sigma_0+\sigma_1}\right)^{\otimes n} & 0 & 0 &  \left({\sigma_0-\sigma_1}\right)^{\otimes n} \\
  0 & \left(2\sigma_2\right)^{\otimes n} & 0 & 0 \\
  0 & 0 & \left( 2\sigma_2\right)^{\otimes n} & 0 \\
  \left({\sigma_0-\sigma_1}\right)^{\otimes n} & 0 & 0 &  \left({\sigma_0+\sigma_1}\right)^{\otimes n} \\
\end{array}%
\right],
\label{eq:ex3'}
\end{equation}
where $N=2^{n+1}\left[\left(2p\right)^n+\left(\sqrt{2}p+2q+2r\right)^n \right]$ is the normalization factor.

$\rho'_{ABA'B'}$ was shown to be approximated to a private state as $n$ is getting larger, so that
it has a non-zero distillable key rate.
Furthermore, $\rho'_{ABA'B'}$ still has PPT since the entanglement distillation protocol is LOCC,
and $\rho'_{ABA'B'}$ is transformed from many copies of PPT state $\rho_{ABA'B'}$.

However, the upper-right block of $\rho'_{ABA'B'}$ is
\begin{equation}
A_{0011}=\frac{1}{N}\left({\sigma_0-\sigma_1}\right)^{\otimes n},
\end{equation}
which is a hermitian operator. Thus, $\rho'_{ABA'B'}$ satisfies the condition of Theorem~\ref{Thm:Dbound},
and this class of PPT states with non-zero distillable key rates
has a lower bound of the trace norm distance for any finite number of $n$.

Chi et al.'s other class of PPT states with non-zero distillable key rates was constructed as
\begin{align}
\rho_{ABA'B'}=&\ket{\phi^+}\bra{\phi^+}\otimes \sigma_0 +\ket{\phi^-}\bra{\phi^-} \otimes \sigma_1
\nonumber\\
&+\ket{\psi^+}\bra{\psi^+} \otimes \sigma_2+\ket{\psi^-}\bra{\psi^-} \otimes \sigma_3,
\label{eq:ex4}
\end{align}
where
\begin{align}
\sigma_0=&p\left(\ket{\phi^+}\bra{\phi^+}+\ket{01}\bra{01}\right),~ 
\sigma_1=p\left(\ket{\phi^-}\bra{\phi^-}+\ket{10}\bra{10}\right), \nonumber\\
\sigma_2=&\sqrt{2}p\ket{x_0}\bra{x_0}+q\ket{00}\bra{00},~ 
\sigma_3=\sqrt{2}p\ket{x_1}\bra{x_1}+q\ket{00}\bra{00},
\label{eq:ex44}
\end{align}
and some orthonormal states $\ket{x_0}$ and $\ket{x_1}$ such that
\begin{equation}
\ket{x_0}\bra{x_0}+\ket{x_1}\bra{x_1}=\ket{01}\bra{01}+\ket{10}\bra{10}.
\end{equation}
For $p=(1-2q)/(4+2\sqrt{2})$ and $0<q<{(2-\sqrt{2})}/{8}$, it was shown that $\rho_{ABA'B'}$
has PPT, and it can be approximated to a private state by using the entanglement distillation protocol~\cite{BDSW}.
Similarly with the class of  Eq.~(\ref{eq:ex3}), it can be checked that the approximation in any
finite dimensional quantum system still has a lower bound of the trace norm distance.
\end{Exam}

Later, there was another class of PPT bound entangled states with non-zero distillable key rates was proposed~\cite{HA}
based on the similar construction with~\cite{HHHO1, HHHO2}. Ii can also be directly checked that
the states in this class satisfy the condition of Theorem~\ref{Thm:Dbound}, and thus, they have a positive
lower bound of the trace norm distance from any private state.

Now, let us consider the examples of PPT bound entangled states proposed in~\cite{HPHH}.
\begin{Exam}(Horodecki et al.~\cite{HPHH})
For any quantum state $\rho_{ABA'B'}$ in $\mathcal{B}(\C^2 \otimes \C^2 \otimes \C^d \otimes \C^{d})$
such that
\begin{equation}
\rho_{ABA'B'}=\frac{1}{2}\left[%
\begin{array}{cccc}
  p\sqrt{X_1X_1^{\dagger}} & 0 & 0 & pX_1 \\
  0 & (1-p)\sqrt{X_2X_2^{\dagger}} & (1-p)X_2 & 0\\
  0 & (1-p)X_2^{\dagger} & (1-p)\sqrt{X_2^{\dagger}X_2} & 0 \\
  pX_1^{\dagger} & 0 & 0 & p\sqrt{X_1^{\dagger}X_1} \\
\end{array}%
\right],
\label{eq:low_dim1}
\end{equation}
where $X_1$ and $X_2$ are arbitrary operator with trace norm one, and $0\leq p \leq 1$,
it is shown that the distillable key rate $K_D (\rho_{ABA'B'})$ fulfills $K_D (\rho_{ABA'B'})\geq1-h(p)$
where $h(p)$ is the binary entropy of distribution $\{p, 1-p \}$~\cite{HPHH}.
To construct a PPT state of the form Eq.~(\ref{eq:low_dim1}), let
\begin{equation}
X_{1}=\frac{1}{\|W_{U} \| } W_{U},~ X_{2}=\frac{W_{U}^{\Gamma}}{\|W_{U}^{\Gamma}\|},
\end{equation}
where
\begin{equation}
W_{U}=\sum_{i,j}u_{ij}\ket{ij}\bra{ji}, \label{eq:W_U}
\end{equation}
$u_{ij}$ are the elements of a unitary matrix $U$ on $\C^{d}$, and
 $W_{U}^{\Gamma}$ is the partially transposed matrix of $W_U$.
We also let the probabilities be
\begin{equation}
p=\frac{\|W_{U}\|}{\|W_{U}\|+\|W_{U}^{\Gamma}\|},~ 1-p= \frac{\|W_{U}^{\Gamma}\|}{\|W_{U}\|+\|W_{U}^{\Gamma}\|};
\label{eq:probabilities}
\end{equation}
then, by direct observation, $\rho_{ABA'B'}$ is invariant under partial transposition and obviously PPT.
Furthermore, we have
\begin{equation}
\frac{p}{1-p}=\frac{\|W_{U}\|}{\|W_{U}^{\Gamma}\|}=\frac{\sum_{i,j}|u_{ij}|}{d}.
\label{eq:security}
\end{equation}
Thus, a non-zero distillable key rate of $\rho_{ABA'B'}$ that is equivalent
to $p \neq \frac{1}{2}$ can be obtained from any choice of the
unitary matrix $U$ satisfying $\sum_{i,j}|u_{ij}|>d$.

Now let us consider the trace norm distance of $\rho_{ABA'B'}$ from private states. Unfortunately, $\rho_{ABA'B'}$
does not satisfy the condition of Theorem~\ref{Thm:Dbound}, since its upper-right block $A_{0011}=\frac{1}{2}pX_1$
is not hermitian. However, in this case, we can directly
evaluate an upper bound of $\|A_{0011}\|$. By noticing that $\|X_1
\|=1$ and $\|W_{U}^{\Gamma} \|=d$, we have
\begin{align}
\|A_{0011}\|=\|\frac{1}{2}pX_1\|
=& \frac{1}{2}- \frac{\|W_{U}^{\Gamma}\|}{2(\|W_{U}\|+\|W_{U}^{\Gamma}\|)}\nonumber\\
=& \frac{1}{2}- \frac{d}{2(\|W_{U}\|+d)},
\end{align}
and thus, $\|A_{0011}\|$ attains its maximum value when $\|W_{U}\|$ is the largest.
We also have
\begin{equation}
\|W_{U}\|=\T\sqrt{W_{U}{W_U}^{\dagger}}=\sum_{i,j}|u_{ij}|;
\end{equation}
therefore, by simple calculus, we have $\|W_{U}\|\leq d\sqrt{d}$, and
\begin{align}
\|A_{0011}\| \leq & \frac{1}{2}- \frac{1}{2(\sqrt{d}+1)}\nonumber\\
\leq& \frac{1}{2}- \frac{1}{2(d+1)},
\label{eq:ex3ubound}
\end{align}
where the second inequality indicates the upper bound obtained in Lemma~\ref{Lem:Ubound}.
Thus, the trace norm distance between $\rho_{ABA'B'}$ and
private states is still satisfied by the lower bound obtained in Theorem~\ref{Thm:Dbound},
\begin{align}
\|\rho_{ABA'B'}-\gamma_{ABA'B'} \| \geq&  \frac{1}{2(\sqrt{d}+1)}\nonumber\\
\geq& \frac{1}{2(d+1)},
\end{align}
for any private state $\gamma_{ABA'B'}$ in
$\mathcal{B}(\C^2 \otimes \C^2 \otimes \C^d \otimes \C^{d})$.
\end{Exam}
In this section, we have considered all known examples of PPT
bound entangled states in bipartite quantum systems with non-zero
distillable key rates. We have remarked that most of the examples
belong to the class proposed in Theorem~\ref{Thm:Dbound}, and the
only exceptional case~\cite{HPHH} is also shown to have the
same bound. (In fact, a coarser bound). Here, we conjecture that
the lower bound of the trace norm distance proposed in
Theorem~\ref{Thm:Dbound} is true for any PPT state.


\section{Conclusions}\label{Sec:Conclusion}

We have provided an analytic positive lower bound of the distance for
a class of PPT states from private states in terms of the dimension of quantum systems, and
we have further shown that this lower bound holds for all known examples
of PPT bound entangled states with non-zero distillable key rates.

Our result implies that most PPT bound entangled states, including
all the examples of non-zero distillable key rates, are
geometrically separated from private states in any finite
dimensional quantum system. This is a strong clue for our conjecture
that every PPT state is distantly separated from private states
in any finite dimensional Hilbert space. The conjecture directly leads to the answer
for an important question: Whether or not, the operational
meaning of bound entanglement in QKD is only asymptotic.
Thus, our result and conjecture will play a central role to clarify the
essential difference between free and bound entangled states in quantum cryptography in both operational
and analytical ways.
Besides quantum cryptography, our result will also provide a rich reference with useful tools
in studying the geometric structure of bounded operators with certain properties in Hilbert space.

\section*{Acknowledgments}
This work is supported by iCORE, CIFAR Associateship, MITACS, and US army.


\end{document}